\newtheorem{Example}{Example}
\newtheorem{remark}{Remark}
\newtheorem{lemma}{Lemma}
\newtheorem{theorem}{Theorem}
\newcommand{\D}{\mathcal D}
\newcommand{\dist}{dist}
\begin{document}

\providecommand{\keywords}[1]
{
  \small	
  \textbf{\textit{Keywords:}} #1
}

\title{A new distance based on minimal absent words and  
applications to biological sequences.} 


\author{Giuseppa Castiglione, Jia Gao, Sabrina Mantaci, Antonio Restivo}
\date{%
    Dipartimento di Matematica e Informatica, Universit{\`a} di Palermo, Italy\\
\texttt{\{giuseppa.castiglione,jia.gao, sabrina.mantaci,antonio.restivo\}@unipa.it}}

\maketitle

\begin{abstract}
 A minimal absent word of a sequence $x$, is a sequence $y$ that is not a factor of $x$, but all of its proper factors are factors of $x$ as well. The set of minimal absent words uniquely defines the sequence itself. In recent times minimal absent words have been used in order to compare sequences. In fact, to do this, one can compare the sets of their minimal absent words. Chairungasee and Crochemore in \cite{Croche2012} define a distance between pairs of sequences $x$ and $y$, where the symmetric difference of the sets of minimal absent words of $x$ and $y$ is involved. Here, we consider a different distance, introduced in \cite{CastiglioneMR20}, based on a specific subset of such symmetric difference that, in our opinion, better capture the different features of the considered sequences. We show the result of some experiments where the distance is tested on a dataset of genetic sequences by 11 living species, in order to compare the new distance with the ones existing in literature.
\end{abstract}

\keywords{Minimal absent words, similarity measures, sequence comparison.}
\section{Introduction}
It is well-known that sequence comparison finds many applications in comparative genomics for the study of evolutions, for building phylogenies, for comparing viruses genomes. In the context of alignment-free methods for sequence comparison the concept of minimal absent word has often been used in recent years. The method is based on the idea that the negative information well represent the sequence itself, hence two sequences can be  compared by comparing the relative sets of minimal absent words. 
The advantages of this approach are that the set of minimal absent words uniquely characterizes the sequence (cf. \cite{MignosiRS01}), the number of minimal absent words of a sequence of length $n$ is linear in $n$ (cf. \cite{CMR98}), and they can be computed in linear time \cite{Croche2018}. As a consequence, it is possible to compare two sequences in time proportional to their lengths. An experimental study of different distance measures based on minimal absent words to analyze similarity/dissimilarity of sequences has been carried out in \cite{RACR2016}. In \cite{Croche2012} Chairungsee and Crochemore introduced a measure of similarity between two sequences $x$  and $y$ making use of a {\em length-weighted index} on the symmetric difference  $M(x) \triangle M(y)$  of the sets of minimal absent words  $M(x)$ and $M(y)$  of  $x$  and  $y$, respectively. In the same paper the authors propose to evaluate the length-weighted index on a {\em sample set}, i.e. the subset of  $M(x) \triangle M(y)$ of words of limited length. Further developments and an extension of the ideas of \cite{Croche2012} can be found in \cite{Croche2018}.


In \cite{CastiglioneMR20}  new similarity measures between sequences based on minimal absent words have been introduced with the aim to deepen a theoretical comparison with the measures in \cite{Croche2012} and \cite{EH1989}.


The flaw of the distance in \cite {Croche2012} is that the set $M(x) \triangle M(y)$ could contain  words that are absent both in $x$ and in $y$, although they are minimal only for one of them.  In our opinion, if the aim is to distinguish $x$ and $y$ it is not appropriate to consider such words.
Hence, we propose to evaluate the length-weighted index on the sample set
$$\D(x,y) = (F(x) \cap  M(y)) \cup (F(y) \cap M(x)),$$
where $F(x)$ denotes the set of factors of $x$.
The set  $\D(x,y)$  contains words that are absent in one of the two words ($x$ or $y$), but that are factors of the other one. In our proposal only the words of $\D(x,y)$ really contribute to distinguish $x$ and $y$. From the algebraic point of view, the set  $\D(x,y)$ is the base of the ideal generated by $M(x)\triangle M(y)$, hence the choice of $\D(x,y)$ as a sample set corresponds to eliminate those words of  $M(x) \triangle M(y)$ that have a proper factor in the same set. For this reason, in general, $\D(x,y)$ has far fewer elements than $M(x)\triangle M(y)$ and $\D(x,y)$ contains words among the shortest of $M(x)\triangle M(y)$. This choice, from the practical point of view, has a potential advantage in terms of computation time.


In the first part of the paper we recall some definitions and main results about the distance defined in \cite{CastiglioneMR20} and we make an experimental study to analyze the similarity/dissimilarity of different sequences using the two distances based on the notion of absent words. The main scope is to propose our method to the community as an instrument that works well in the sequence comparison and phylogeny reconstruction. Hence, we evaluate our experimental results following the methology in \cite{Croche2012}, \cite{LIU2005} and \cite{RACR2016}. 

We show the result of some experiments where the distance is tested on a data set of genetic sequences by 11 living species. We compare the cardinality of $\D(x,y)$ and the cardinality of $M(x) \triangle M(y)$ and then we compare the sum of the lengths of their elements, respectively. 
Finally, we use the two distance measures analyzed to reconstruct phylogenetic trees using Unweighted Pair Group Method with Arithmetic mean (UPGMA) and Neighbor Joining (NJ) methods and we conclude with some considerations. 

\section{Definitions and notations}\label{sec:def}

Let $\Sigma$ be a finite alphabet and $\Sigma^*$ the set of the words over $\Sigma$. The set $\Sigma^*$ is the free monoid generated by $\Sigma$ with respect to the word concatenation and with the empty word $\epsilon$ the unit element. If $X\subset \Sigma^*$, we denote by $Card(X)$ its cardinalty, i.e. the number of its elements, whereas we denote by $s(X)$ the {\em total lengt} of $X$. i.e.: $$s(X)= \sum_{u \in X} \vert u \vert.$$
where $\vert u\vert$ denotes the length of the word $u$.
A set $I \subseteq \Sigma^*$ is said to be a ({\em two-sided}) {\em ideal} of $\Sigma^*$ if for each $u \in I$ and $v \in \Sigma^*$ the two concatenations $uv, vu \in I$, i.e. $I=\Sigma^* I \Sigma^*$. The {\em base} of the ideal $I$ is the minimal set $B$ (with respect to the set inclusion) such that $I=\Sigma^*B\Sigma^*$.

Let $v$ be a word of $\Sigma^*$, we say that $u$ is a {\em factor} of $v$ if there exist $z,w \in \Sigma^*$ such that $v=zuw$; if $z=\epsilon$ (resp. $w=\epsilon$) we say that $u$ is a {\em prefix} (resp. a {\em suffix}) of $v$; if $u \neq v$ we say that $u$ is a {\em proper factor} of $v$. If $u$ is a prefix of $v$, i.e. $v=uz$, we denote $u^{-1}v=z$. In what follows we denote by $F(v)$ the set of factors of $v$. We say that a word $u$ {\em occurs} in $v$ if it is a factor of $v$.


\medskip
A word $u$ is an {\em absent word} of $v$ if it does not occur in $v$. An absent word is a {\em minimal absent word} if all its proper factors occur in $v$. We denote by $M(v)$ the set of minimal absent words of $v$.

\begin{Example}
Let $v=abaabab$ then $M(v)=\{aaa, aabaa, baba, bb\}$.
\end{Example}

A language $L\subseteq \Sigma^*$ is called {\em factorial} if it contains all the factors of its own words, whereas it is called {\em antifactorial} if no word in the language is a proper factor of another word in the language. In particular, for any word $v\in \Sigma^*$, $F(v)$ is a factorial language and $M(v)$ is antifactorial.

\medskip 

Remark that the complement of $F(v)$ (i.e. the set of the words that are not factors of $v$) is an ideal of $\Sigma^*$ and $M(v)$ is its base. This allows to establish a duality between the sets $F(v)$ and $M(v)$ given by the relations (cf. \cite{CMR98}):
$$F(v)=\Sigma^* \setminus \Sigma^* M(v) \Sigma^*,$$ $$M(v)=\Sigma F(v) \cap F(v)\Sigma \cap (\Sigma^* \setminus F(v)).$$

This last relation comes from the remark that if $v\in \Sigma^*$, the word $u=a_1a_2\cdots a_n$, with $a_i\in \Sigma$ is a  minimal absent word of $v$ iff  $u \notin F(v)$ and $a_1\cdots a_{n-1}$, $a_2 \cdots a_n \in F(v)$.

The UPGMA (Unweighted Pair Group Method with Arithmetic Mean) is a technique to reconstruct the phylogenetic tree for a set of taxa given the distance matrix (cf. \cite{Sung2009}). It is based on the principle that similar taxa should be close in the phylogenetic tree, hence it builds the tree by clustering similar taxa. The NJ (Neighbor-Joining) method is based on the principle of  finding pairs of operational taxonomic units that minimize the total branch length at each stage of clustering of unit, starting with a starlike tree (cf. \cite{Saitou1987TheNM}).

\section{Similarity measures based on sets of absent words}\label{sec:similarity}

The idea to measure similarity by minimal absent words is based on the intuition that two words, $x$ and $y$, are as more distant as bigger is the set of the non common absent words and as shorter are the words in it. 

\medskip
This idea is formalized in a paper by Chairungsee and Crochemore \cite{Croche2012} where the notion of length weighted index of a set is used in order to define a similarity/dissimilarity measure of two sequences. The {\em length weighted index} is defined as the measure that associate to a set $X\subseteq \Sigma^*$ the quantity:
$$\mu(X)=\sum_{w\in X} \frac{1}{|w|^2}.$$\label{measure}

This measure is used in \cite{Croche2012}  in order to define a distance between two words $x$ and $y$, by taking the set  $X=M(x)\triangle M(y)$, where $\triangle$ denotes the symmetric difference operator between two sets. Therefore the distance is denoted and defined as:

$$\dist(x,y)=\mu(M(x)\triangle M(y))=\sum_{w\in M(x) \triangle  M(y)}\frac{1}{|w|^2}$$

\begin{Example}\label{ex-dist}
Let $x=cbaabdcb$ and $y=abcba$.
Then, $$M(x)=\{ac, ad, bb, bc, ca, cc, cd, da, db, dd, aaa, aba, bab, cbd, dcba\},$$
$$M(y)=\{aa, ac, bb, ca, cc, aba, bab, cbc, d\},$$ $$M(x)\triangle M(y)=\{d, aa, ad, bc, cd, da, db, dd, aaa, cbd, cbc, dcba\},$$
Therefore $$\dist(x,y)=
1+\frac{7}{4}+\frac{3}{9}+\frac{1}{16}=\frac{453}{144}.$$
\end{Example}
We remark that the distance between $x$ and $y$ is bigger when $M(x) \triangle  M(y)$ contains shorter words. In fact, longer minimal absent words do not substantially affect the distance $\dist(x,y)$. This is the reason why in \cite{Croche2012} the authors propose to ignore from the set $M(x)\triangle M(y)$ those words having length longer than a fixed threshold. In this way a great advantage is gained in terms of computation time.
Then, if $M_l(x)$ denotes the set of minimal absent words having length smaller then or equal to $l$, it is defined:

$$\dist_l(x,y)=\mu(M_l(x)\triangle M_l(y))=\sum_{w\in M_l(x) \triangle  M_l(y)}\frac{1}{|w|^2}$$




In \cite{CastiglioneMR20} a different distance  also based on the measure $\mu$ is considered, but applied to a subset of $M(x)\triangle M(y)$ that better captures the difference between two words. Moreover, by considering this subset, the requirement of having words with limited length is undirectely satisfied. This subset of $M(x)\triangle M(y)$ is in fact made of 
those factors of $x$ that are minimal absent words for $y$ and viceversa. In other terms, we want the comparison of the two sequances $x$ and $y$ not to be influenced by those minimal absent words of $y$ that do not occur in $x$. This idea is formally described as follows.

For all $x,y\in \Sigma^*$, we define the set:
$$D(x \leftarrow {y})=F(x)\cap M(y)$$ i.e. the set of minimal absent words of $y$ that are factors of $x$.

Given two words $x$ and $y$ we can define $$\mathcal{D}(x,y) =D(x \leftarrow {y})\cup D(y \leftarrow {x})=(F(x)\cap M(y))\cup (F(y)\cap M(x)).$$

\begin{Example}\label{ex-Dxy}
Let $x=cbaabdcb$ and $y=abcba$ as in Example \ref{ex-dist}. we have that:
$$M(x)\triangle M(y)=\{d, aa, ad, bc, cd, da, db, dd, aaa, cbd, cbc, dcba\},$$
$$\mathcal{D}(x,y)=\{d, aa, bc\}.$$
 Remark that the word $cd$, for instance, is absent both in $x$ and in $y$ (although not minimal in $y$) so, in some way, it represents a common property of the two words, and it should not be considered its contribution to the distance. The same holds for the words $ad, da, db, dd, aaa, cbd, cbc$, and $dcba$. On the other hand, the word $d$, for instance, is a minimal absent word in $y$, but occurs in $x$ and therefore discriminates the two words. Viceversa, the word $aa$ is minimal absent in $y$ but occurs in $x$ i.e. it also contributes to their dissimilarity. 
In this example, the cardinality of the set $\D(x,y)$ is smaller than the one of the symmetric difference, and the words in $\D(x,y)$ are among the smallest in $M(x)\triangle M(y)$.
\end{Example}

The following two properties have been proved in \cite{CastiglioneMR20}.

\begin{remark}
For all $x,y\in \Sigma^*$,
\begin{enumerate}
    \item\label{dzero} $\D(x\leftarrow y)=\emptyset$ if and only if $x\in F(y)$;
    \item\label{dcalzero} $\D(x,y)=\emptyset$ if and only if $x=y$.
    \end{enumerate}
\end{remark}

In what follows we recall some algebraic properties of $\D(x,y)$ also in relation with $M(x)\triangle M(y)$.

\begin{lemma}\label{lm:inclusion}
For all $x,y\in \Sigma^*$,
    $\mathcal{D}(x,y) \subseteq M(x)\triangle M(y)$.
\end{lemma}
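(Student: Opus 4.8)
The plan is to argue directly by set membership, using a single elementary observation: no factor of a word can be one of that word's own minimal absent words. Indeed, if $w\in F(x)$ then $w$ occurs in $x$, so $w$ is not an absent word of $x$, and a fortiori $w\notin M(x)$. Equivalently, $F(x)\cap M(x)=\emptyset$. This is immediate from the definitions of factor and of (minimal) absent word recalled in Section~\ref{sec:def}, so I would state it as a one-line remark rather than a separate lemma.

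Granting this, I would take an arbitrary $w\in\mathcal{D}(x,y)$. By the definition $\mathcal{D}(x,y)=(F(x)\cap M(y))\cup(F(y)\cap M(x))$, either $w\in F(x)\cap M(y)$ or $w\in F(y)\cap M(x)$. In the first case, $w\in F(x)$ forces $w\notin M(x)$ by the observation above, while $w\in M(y)$; hence $w\in M(y)\setminus M(x)\subseteq M(x)\triangle M(y)$. The second case is symmetric, exchanging the roles of $x$ and $y$, and gives $w\in M(x)\setminus M(y)\subseteq M(x)\triangle M(y)$. Since $w$ was arbitrary, $\mathcal{D}(x,y)\subseteq M(x)\triangle M(y)$.

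Equivalently, and perhaps more compactly, one can record the two inclusions $F(x)\cap M(y)\subseteq M(y)\setminus M(x)$ and $F(y)\cap M(x)\subseteq M(x)\setminus M(y)$ (each a restatement of $F(x)\cap M(x)=\emptyset$, resp. $F(y)\cap M(y)=\emptyset$), and then take the union of the right-hand sides, which is exactly $M(x)\triangle M(y)$. I expect no real obstacle in this proof; the only point requiring any care is making the disjointness $F(\cdot)\cap M(\cdot)=\emptyset$ explicit, since the whole statement hinges on it.
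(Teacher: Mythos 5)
Your proof is correct and follows essentially the same route as the paper's: take an element of $\mathcal{D}(x,y)$, split according to which of the two intersections it lies in, and conclude it belongs to $M(y)\setminus M(x)$ (resp.\ $M(x)\setminus M(y)$). The only difference is that you make explicit the disjointness $F(x)\cap M(x)=\emptyset$, which the paper leaves implicit; this is a harmless (indeed welcome) clarification, not a different argument.
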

\begin{proof}
Let $z\in \mathcal{D}(x,y)$. Then either $z\in F(x)\cap M(y)$ or $z \in F(y)\cap M(x)$. Suppose $z\in F(x)\cap M(y)$ (the proof for the other case is symmetric). Then $z\in M(y)$ and $z\not \in M(x)$, therefore $z \in M(y)\backslash M(x)$. \end{proof}

\begin{lemma}\label{lm:antifactorial}
For all $x,y\in \Sigma^*$,
$\D(x,y)$ is antifactorial.
\end{lemma}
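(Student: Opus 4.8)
The plan is to show directly that no element of $\D(x,y)$ is a proper factor of another element of $\D(x,y)$. Recall that $\D(x,y) = (F(x)\cap M(y))\cup(F(y)\cap M(x))$, and that both $M(x)$ and $M(y)$ are antifactorial (as stated in the excerpt). The subtlety is that the union of two antifactorial languages need not be antifactorial in general, so we really have to use the specific structure here: each word of $\D(x,y)$ lies simultaneously in some $M(\cdot)$ and in some $F(\cdot)$.

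First I would take two distinct words $u, w \in \D(x,y)$ and suppose, for contradiction, that $u$ is a proper factor of $w$. There are essentially two cases according to which of the two pieces $w$ comes from (by the symmetry $x\leftrightarrow y$ it suffices to treat $w\in F(x)\cap M(y)$). Within that case there are two subcases for $u$: either $u\in F(y)\cap M(x)$ or $u\in F(x)\cap M(y)$. The second subcase is immediate: $u,w\in M(y)$ with $u$ a proper factor of $w$ contradicts the antifactoriality of $M(y)$. So the only real case is $w\in F(x)\cap M(y)$ and $u\in F(y)\cap M(x)$ (and its mirror image).

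For that case I would argue as follows. Since $w\in M(y)$ is a minimal absent word of $y$, every proper factor of $w$ occurs in $y$, i.e. lies in $F(y)$; in particular $u\in F(y)$. On the other hand $u\in M(x)$, so $u\notin F(x)$. But $w\in F(x)$, and $F(x)$ is a factorial language, so every factor of $w$ — in particular the proper factor $u$ — lies in $F(x)$. This gives $u\in F(x)$ and $u\notin F(x)$, a contradiction. Hence no such $u,w$ exist and $\D(x,y)$ is antifactorial.

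The main (and really only) obstacle is bookkeeping the case split cleanly and making sure the genuinely different case is handled by the factoriality of $F(x)$ clashing with $u\in M(x)$, rather than by antifactoriality of a single $M(\cdot)$; the symmetric cases ($w$ coming from $F(y)\cap M(x)$) follow by interchanging the roles of $x$ and $y$. One should also note at the outset that if $u$ is a proper factor of $w$ then $u\neq w$, so the assumption "$u,w$ distinct with $u$ a proper factor of $w$" loses no generality, and the degenerate possibility $\D(x,y)=\emptyset$ (which happens iff $x=y$) makes the statement vacuously true.
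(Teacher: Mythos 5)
Your proof is correct and follows essentially the same route as the paper's: the same case split on which piece of the union each word belongs to, with the easy subcase killed by antifactoriality of a single $M(\cdot)$ and the mixed subcase killed by the clash between $u\in M(x)$ (so $u\notin F(x)$) and $u$ being a factor of $w\in F(x)$. No gaps; the extra observation that $u\in F(y)$ is harmless but not needed.
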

\begin{proof}
First, note that $\epsilon \notin \D(x,y)$ because it is never an absent factor i.e. $\epsilon \notin M(x)$ and $\epsilon \notin M(y)$. We have to prove that for all $z\in \D(x,y)$ none of its factor is in $\D(x,y)$.
By contradiction suppose there exists a $z\in \D(x,y)$ and a factor $z'$ of $z$ such that $z'\in \D(x,y)$. Then either $z\in M(y)\cap F(x)$ or $z\in M(x)\cap F(y)$.
Let us suppose that the first condition holds (in the other case we have an analogous proof). Then, in turn, either $z'\in M(y)\cap F(x)$ or $z'\in M(x)\cap F(y)$.
In the first case we would have $z, z'\in M(y)$ and $z'$ a factor of $z$, a contradiction since $M(y)$ is antifactorial.
In the other case $z'\in M(x)$ and this is a contradiction since $z'\in F(z)\subseteq F(x)$.
\end{proof}

Note that, in general $M(x)\triangle M(y)$ is not antifactorial and $\Sigma^*(M(x)\triangle M(y))\Sigma^*$ is an ideal.

\begin{theorem}\label{th-base}
Let $x,y\in \Sigma^*$. Then $\D(x,y)$ is the base of the ideal $\Sigma^*(M(x)\triangle M(y))\Sigma^*$.
\end{theorem}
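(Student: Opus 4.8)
The plan is to show two set inclusions plus a minimality argument. Recall that the base $B$ of an ideal $I = \Sigma^* B \Sigma^*$ is characterized as the antifactorial set of the $\subseteq$-minimal words of $I$; equivalently, $B$ consists of exactly those words in $I$ that have no proper factor in $I$. By Lemma~\ref{lm:antifactorial} we already know $\D(x,y)$ is antifactorial, and by Lemma~\ref{lm:inclusion} we know $\D(x,y) \subseteq M(x)\triangle M(y) \subseteq \Sigma^*(M(x)\triangle M(y))\Sigma^* =: I$. So it remains to prove that $\D(x,y)$ \emph{generates} $I$, i.e. $I \subseteq \Sigma^* \D(x,y) \Sigma^*$, and that it is minimal with this property.

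For the generation step, the key claim is: every word $w \in M(x)\triangle M(y)$ has some factor lying in $\D(x,y)$. Indeed, since $I$ is generated by $M(x)\triangle M(y)$, once every element of the generating set $M(x)\triangle M(y)$ has a factor in $\D(x,y)$, we get $I = \Sigma^*(M(x)\triangle M(y))\Sigma^* \subseteq \Sigma^*\D(x,y)\Sigma^* \subseteq I$, hence equality. To prove the claim, take $w \in M(x)\triangle M(y)$; without loss of generality $w \in M(y)\setminus M(x)$ (the other case is symmetric, swapping the roles of $x$ and $y$). Since $w \in M(y)$, $w$ is an absent word of $y$; but $w \notin M(x)$. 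Now either $w \in F(x)$ or $w \notin F(x)$. If $w \in F(x)$, then $w \in F(x)\cap M(y) = D(x\leftarrow y) \subseteq \D(x,y)$, and $w$ is a factor of itself — done. If $w \notin F(x)$, then $w$ is an absent word of $x$ that is not a \emph{minimal} absent word of $x$ (since $w \notin M(x)$); hence some proper factor $w'$ of $w$ is itself an absent word of $x$. Take $w'$ to be a \emph{shortest} proper factor of $w$ that is absent in $x$; then $w'$ is a minimal absent word of $x$, i.e. $w' \in M(x)$. On the other hand $w'$ is a proper factor of $w \in M(y)$, and since $M(y)$ is antifactorial all proper factors of $w$ occur in $y$, so $w' \in F(y)$. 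Thus $w' \in F(y)\cap M(x) = D(y\leftarrow x) \subseteq \D(x,y)$, and $w'$ is a factor of $w$, as required.

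For minimality, suppose $B'$ is any set with $\Sigma^* B' \Sigma^* = I$. Since $\D(x,y) \subseteq I = \Sigma^* B' \Sigma^*$, every $z \in \D(x,y)$ has a factor $b \in B'$. But $b \in B' \subseteq I = \Sigma^* \D(x,y)\Sigma^*$, so $b$ in turn has a factor $z'' \in \D(x,y)$. Then $z''$ is a factor of $b$ which is a factor of $z$, so $z''$ is a factor of $z$ with $z, z'' \in \D(x,y)$; antifactoriality of $\D(x,y)$ forces $z'' = z$, hence $z = b \in B'$. Therefore $\D(x,y) \subseteq B'$, which is exactly minimality of $\D(x,y)$ among generating sets. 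Combining the antifactoriality, the generation of $I$, and this minimality, $\D(x,y)$ is the base of $I$.

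I expect the main obstacle to be the generation step, specifically the case $w \in M(y)\setminus M(x)$ with $w \notin F(x)$: one must argue cleanly that passing to a shortest absent-in-$x$ proper factor $w'$ of $w$ yields an element of $M(x)$, and simultaneously that $w' \in F(y)$ because $M(y)$ is antifactorial (so all proper factors of the minimal absent word $w$ of $y$ do occur in $y$). The rest — the two inclusions from the lemmas and the minimality argument via antifactoriality — is routine once the standard characterization of the base of an ideal is invoked. It may be worth stating that characterization explicitly (an ideal $I$ of $\Sigma^*$ has a unique base, namely its set of $\subseteq$-minimal elements, which is automatically antifactorial), with a one-line justification, since it is used twice.
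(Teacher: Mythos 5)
Your proof is correct and follows essentially the same route as the paper's: the heart of both arguments is that a word of $M(x)\triangle M(y)$ which is absent but not minimally absent in the other sequence contains a \emph{proper} factor that is a minimal absent word of that sequence, and this factor lands in $\D(x,y)$ because all proper factors of a minimal absent word occur in the word. The only difference is that you spell out the characterization of the base (uniqueness and minimality among generating sets) explicitly, whereas the paper takes it for granted after invoking Lemma~\ref{lm:antifactorial}.
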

\begin{proof}
Since $\D(x,y)$ is antifactorial, in order to prove the statement it is sufficient to prove that any word $z\in (M(x)\triangle M(y))\backslash \D(x,y)$ has a factor $z'\in \D(x, y)$. Since $z\in M(x)\triangle M(y)$, then:
\begin{enumerate}
    \item\label{caso1} either $z\in M(x)$ and $z\not \in M(y)$;
    \item\label{caso2} or $z\in M(y)$ and $z\not \in M(x)$;
\end{enumerate}
Let us consider case \ref{caso1}. The proof of case \ref{caso2} is analogous.
By the hypothesis $z\not\in D(x \leftarrow y)\cup D(y \leftarrow x)$ we have $z\not\in M(x) \cap F(y)$. By conditions $z \in M(x)$, $z \notin M(y)$ and $z\not\in M(x) \cap F(y)$ we deduce that $z \not \in F(y)$. It means that $z$ is absent in $y$ but not minimal, therefore there exists a factorization $z=z_1z'z_2$ with $z'\in M(y)$. Since $z\in M(x)$, its proper factors are in $F(x)$. Therefore $z'\in F(x)\cap M(y)=D(x \leftarrow y)$.
\end{proof}

In other terms, Theorem \ref{th-base} states that considering $\D(x,y)$ is equivalent to ignore in $M(x)\triangle M(y)$ those words that have a proper factor in the same set.
We are now ready to define a distance based on the length weighted index applied to $\D(x,y)$:
$$\delta(x,y)=\mu (\D(x,y))=\sum_{w\in\D(x,y)}\frac{1}{|w|^2}=\sum_{w\in D(x\leftarrow y)}\frac{1}{|w|^2}+\sum_{w\in D(y\leftarrow x)}\frac{1}{|w|^2}.$$

We remark that as in the case of $\dist_l$, the distance $\delta$ takes into consideration elements among the shortest of  $M(x)\triangle M(y)$ because they are elements of the base of the ideal $\Sigma^*(M(x)\triangle M(y))\Sigma^*.$

We conclude the section with an example.
\begin{Example}
Let $x=cbaabdcb$ and $y=abcba$. As shown in Example \ref{ex-Dxy}, $\mathcal{D}(x,y)=\{d, aa, bc\}$. Then:
$$\delta(x,y)
=1+\frac{1}{2}
=\frac{3}{2}$$
\end{Example}



\section{Experimental analysis on biological sequences}

In this section we show results of our experimental study. The program to compute the two sample sets was implemented in the \texttt{C++} programming language and developed under GNU/Linux operating system. Our program makes use of the implementation of the compressed suffix tree available in the Succinct Data Structure Library~\cite{gbmp2014sea}. The input parameters are a (Multi)FASTA file with the input sequence(s). The implementation is distributed under the GNU General Public License, and it is available at \url{http://github.com/drjiagao/dmaw}.

\paragraph*{The data set.}

\begin{table}[tbh]
\caption{Coding sequences of the first exon sequences of $\beta$-globin genes from Human, Goat, Gallus, Opossum, Lemur, Mouse, Rabbit, Rat, Bovine, Gorilla, and Chimpanzee.}
\label{table:data}
\begin{center}
\begin{tabular} {|c|l|} 
\hline 
 \scriptsize SPECIES & \scriptsize CODING SEQUENCES\\ 
\hline
  \scriptsize Human & \vspace{-0.5 em} \scriptsize ATGGTGCACCTGACTCCTGAGGAGAAGTCTGCCGTTACTGCCCTGTGGGGCAA \\ 
 &   \scriptsize GGTGAACGTGGATTAAGTTGGTGGTGAGGCCCTGGGCAG \\
\hline
  \scriptsize Goat & \vspace{-0.5 em} \scriptsize ATGCTGACTGCTGAGGAGAAGGCTGCCGTCACCGGCTTCTGGGGCAAGGTGAA \\ 
 & \scriptsize AGTGGATGAAGTTGGTGCTGAGGCCCTGGGCAG\\
\hline
  \scriptsize Opossum & \vspace{-0.5 em} \scriptsize ATGGTGCACTTGACTTCTGAGGAGAAGAACTGCATCACTACCATCTGGTCTAA \\ 
 & \scriptsize GGTGCAGGTTGACCAGACTGGTGGTGAGGCCCTTGGCAG\\
\hline
  \scriptsize Gallus & \vspace{-0.5 em} \scriptsize ATGGTGCACTGGACTGCTGAGGAGAAGCAGCTCATCACCGGCCTCTGGGGCAA \\ 
  & \scriptsize GGTCAATGTGGCCGAATGTGGGGCCGAAGCCCTGGCCAG\\
               
\hline
  \scriptsize Lemur & \vspace{-0.5 em} \scriptsize ATGACTTTGCTGAGTGCTGAGGAGAATGCTCATGTCACCTCTCTGTGGGGCAA \\ 
 & \scriptsize GGTGGATGTAGAGAAAGTTGGTGGCGAGGCCTTGGGCAG\\
\hline
  \scriptsize Mouse & \vspace{-0.5 em} \scriptsize ATGGTTGCACCTGACTGATGCTGAGAAGTCTGCTGTCTCTTGCCTGTGGGCAA\\ 
 & \scriptsize AGGTGAACCCCGATGAAGTTGGTGGTGAGGCCCTGGGCAGG\\
\hline
  \scriptsize Rabbit & \vspace{-0.5 em} \scriptsize ATGGTGCATCTGTCCAGTGAGGAGAAGTCTGCGGTCACTGCCCTGTGGGGCAA \\ 
 & \scriptsize GGTGAATGTGGAAGAAGTTGGTGGTGAGGCCCTGGGC\\
\hline
  \scriptsize Rat & \vspace{-0.5 em} \scriptsize ATGGTGCACCTAACTGATGCTGAGAAGGCTACTGTTAGTGGCCTGTGGGGAAA \\ 
 & \scriptsize GGTGAACCCTGATAATGTTGGCGCTGAGGCCCTGGGCAG\\
\hline
  \scriptsize Gorilla & \vspace{-0.5 em} \scriptsize ATGGTGCACCTGACTCCTGAGGAGAAGTCTGCCGTTACTGCCCTGTGGGGCAA \\ 
 & \scriptsize GGTGAACGTGGATGAAGTTGGTGGTGAGGCCCTGGGCAGG\\
\hline
 \scriptsize Bovine & \vspace{-0.5 em} \scriptsize ATGCTGACTGCTGAGGAGAAGGCTGCCGTCACCGCCTTTTGGGGCAAGGTGAA \\ 
 & \scriptsize AGTGGATGAAGTTGGTGGTGAGGCCCTGGGCAG\\
\hline
  \scriptsize Chimpanzee & \vspace{-0.5 em} \scriptsize ATGGTGCACCTGACTCCTGAGGAGAAGTCTGCCGTTACTGCCCTGTGGGGCAA\\ 
 & \scriptsize GGTGAACGTGGATGAAGTTGGTGGTGAGGCCCTGGGCAGGTTGGTATCAAGG\\
\hline
\end{tabular}
\end{center}
\end{table}

The data set we take into consideration is the set of first exon sequences of $\beta$-globin genes from 11 species: Human, Goat, Gallus, Opossum, Lemur, Mouse, Rabbit, Rat, Bovine, Gorilla, and Chimpanzee. The gene family of $\beta$-globin is used to analyze DNA and its first exon is used for many DNA studies instead of computing the similarity/dissimilarity of the whole genomes. The length of such sequences is in fact quite short, on the order of 100 bases.

The coding sequences are listed in Table \ref{table:data}. This data set has been used by Liu and Wang  \cite{LIU2005}, by Chairungaee and Crochemore \cite{Croche2012} and  Rahman et al. \cite{RACR2016} to test the relative similarity/dissimilarity measures on these DNA sequences, in order to build phylogenetic trees. Here we test the new distance on the same data set in order to compare the resulting distance matrices and classifications with previous studies. In particular we consider the distance $\dist$ introduced in \cite{Croche2012}, since it is conceptually similar to $\delta$. The distance matrix of $\delta$ and $\dist$ are reported in Table \ref{table:ourdist} and \ref{table:triangle}, respectively.

\begin{table}[tbh]
\caption{Distance matrix obtained by applying our distance $\delta$ to the data set with 11 living organisms.}
\label{table:ourdist}
\begin{center}
\resizebox{\textwidth}{!}{\begin{tabular}  { | c | c | c | c | c | c | c | c | c | c | c | c |} 
\hline
SPECIE & Human & Goat & Oppssum & Gallus & Lemur & Mouse & Rabbit & Rat & Gorilla & Bovine & Chimpanzee\\ 
\hline
Human & 0.00 & 3.12 & 3.83 & 4.00 & 4.37 & 2.97 & 3.01 & 3.94 & 0.48 & 3.04 & 0.98 \\ 

\hline
Goat & & 0.00 & 3.80 & 3.27 & 3.34 & 3.47 & 3.39 & 3.45 & 2.88 & 1.04 & 3.13 \\ 

\hline
Opossum & & & 0.00 & 3.69 & 4.76 & 4.52 & 3.35 & 4.30 & 3.86 & 3.91 & 3.88 \\ 

\hline
Gallus & & & & 0.00 & 3.84 & 4.16 & 3.48 & 4.03 & 4.13 & 3.67 & 4.17 \\ 
\hline
Lemur & & & & & 0.00 & 4.23 & 4.31 & 4.45 & 4.16 & 3.31 & 4.60 \\ 

\hline
Mouse & & & & & & 0.00 & 3.71 & 3.27 & 2.60 & 3.23 & 2.91 \\ 

\hline
Rabbit & & & & & & & 0.00 & 3.53 & 3.06 & 3.45 & 3.21 \\ 

\hline
Rat & & & & & & & & 0.00 & 3.85 & 3.68 & 4.14 \\ 

\hline
Gorilla & & & & & & & & & 0.00 & 2.79 & 0.55 \\ 

\hline
Bovine & & & & & & & & & & 0.00 & 3.04 \\ 

\hline
Chimpanzee & & & & & & & & & & & 0.00 \\ 

\hline
\end{tabular}}
\end{center}
\end{table}

\begin{table}[t]
\caption{Distance matrix obtained by applying the distance $\dist$ to the data set with 11 living organisms.}
\label{table:triangle}
\begin{center}
    \resizebox{\textwidth}{!}{\begin{tabular}  { | c | c | c | c | c | c | c | c | c | c | c | c |} 
\hline
SPECIE & Human & Goat & Oppssum & Gallus & Lemur & Mouse & Rabbit & Rat & Gorilla & Bovine & Chimpanzee\\ 
\hline
Human & 0.00 & 8.34 & 10.55 & 11.18 & 10.61 & 8.21 & 8.44 & 10.54 & 1.09 & 8.22 & 2.65 \\ 
\hline
Goat & & 0.00 & 10.37 & 9.23 & 9.00 & 8.50 & 8.63 & 10.21 & 8.06 & 2.78 & 8.83 \\ 

\hline
Opossum & & & 0.00 & 11.01 & 12.34 & 12.60 & 10.24 & 12.08 & 10.45 & 10.87 & 11.00 \\ 

\hline
Gallus & & & & 0.00 & 11.08 & 10.93 & 9.17 & 12.55 & 11.46 & 10.22 & 11.76 \\ 

\hline
Lemur & & & & & 0.00 & 10.69 & 10.93 & 11.29 & 10.24 & 8.76 & 11.11 \\ 

\hline
Mouse & & & & & & 0.00 & 10.11 & 10.32 & 7.43 & 8.11 & 8.50 \\ 

\hline
Rabbit & & & & & & & 0.00 & 10.53 & 8.58 & 9.06 & 9.00 \\ 

\hline
Rat & & & & & & & & 0.00 & 10.60 & 10.41 & 11.71 \\ 

\hline
Gorilla & & & & & & & & & 0.00 & 7.93 & 1.64 \\ 

\hline
Bovine & & & & & & & & & & 0.00 & 8.71 \\ 

\hline
Chimpanzee & & & & & & & & & & & 0.00 \\ 

\hline
\end{tabular}}
\end{center}
\end{table}

\paragraph*{Experimental results of cardinality and total length.}
 
In previous sections we have remarked that in order to compute the dissimilarity of two sequences $x$ and $y$, we chose to use the measure $\mu$ on the elements of $\D(x,y)$, i.e.  minimal absent words of one sequence that occur in the other one. In our opinion this set considers the elements in $M(x)\triangle M(y)$ that really distinguish the words $x$ and $y$. As a side effect, we state that this set in general contains a compressed information since, by construction, this set is expected to be smaller than $M(x)\triangle M(y)$. In what follows, we try to experimentally extimate how smaller it is. We consider two parameters as indices of the size of a set: the cardinality and the total length.

In Table \ref{table:rationumberMAW} we report the matrix of the values of the ratio $$Card(\D(x,y)) \slash Card(M(x) \triangle M(y))$$ for each pair $(x,y)$ of sequences in the dataset. As we can observe, in practice, this ratio is approximately $1 \slash 4$. This suggests that $\D(x,y)$ contains around $25\%$ of the elements of $M(x) \triangle M(y)$.

In ordet to clarify the reason for this smaller cardinality, we consider the following example.
 
 \begin{Example}\label{ex:ratmouse}
 Let $x$ and $y$ be the coding sequences of Mouse and Rat, respectively. The set $M(x) \triangle M(y)$ contains $210$ elements whereas $\D(x,y)$ contains only $47$ elements: hence the remaining $163$ elements are minimal absent for a sequence and absent, although not minimal, for the other one. Therefore, in any case, they are absent for both. For instance, the word $CCCCC$ is minimal absent for the Mouse; the Rat has not five consecutive $C$'s because it has not four consecutive $C$'s ($CCCC$ is minimal absent for the Rat). Hence, in our opinion, the significant information is that $CCCC$ is minimal absent for the Rat but is a factor for the Mouse.
 \end{Example}

\begin{table}[thp!]
\caption{Matrix of the values $Card(\D(x,y)) \slash Card(M(x) \triangle M(y))$ for the data set with 11 living organisms.}
\label{table:rationumberMAW}
\begin{center}
\resizebox{\textwidth}{!}{\begin{tabular} { | c | c | c | c | c | c | c | c | c | c | c | c |} 
\hline
SPECIE & Human & Goat & Oppssum & Gallus & Lemur & Mouse & Rabbit & Rat & Gorilla & Bovine & Chimpanzee\\ 
\hline
Human & 0.00 & 0.27 & 0.25 & 0.25 & 0.27 & 0.26 & 0.24 & 0.26 & 0.30 & 0.25 & 0.24 \\ 

\hline
Goat & & 0.00 & 0.26 & 0.24 & 0.27 & 0.31 & 0.27 & 0.24 & 0.26 & 0.26 & 0.25 \\ 

\hline
Opossum & & & 0.00 & 0.23 & 0.26 & 0.26 & 0.23 & 0.24 & 0.25 & 0.24 & 0.24 \\ 

\hline
Gallus & & & & 0.00 & 0.25 & 0.27 & 0.27 & 0.22 & 0.25 & 0.24 & 0.25 \\ 

\hline
Lemur & & & & & 0.00 & 0.30 & 0.29 & 0.29 & 0.27 & 0.28 & 0.28 \\ 

\hline
Mouse & & & & & & 0.00 & 0.24 & 0.22 & 0.25 & 0.29 & 0.24 \\ 

\hline
Rabbit & & & & & & & 0.00 & 0.23 & 0.24 & 0.25 & 0.24 \\ 

\hline
Rat & & & & & & & & 0.00 & 0.25 & 0.26 & 0.24 \\ 

\hline
Gorilla & & & & & & & & & 0.00 & 0.23 & 0.21 \\ 

\hline
Bovine & & & & & & & & & & 0.00 & 0.23 \\ 

\hline
Chimpanzee & & & & & & & & & & & 0.00 \\ 

\hline

\end{tabular}}
\end{center}
\end{table}

Furthermore, in Table \ref{table:rationlenghtMAW} we consider the ratio of the total lengths  $$s(\D(x,y)) \slash s(M(x)\triangle M(y))$$ for each pair $(x,y)$ of sequences in the data set. As one can notice, in all the cases this ratio is even less than the ratio of cardinalities, in fact $\D(x,y)$ has around the $20\%$ of the total length of $M(x)\triangle M(y)$, showing that not only the set $\D(x,y)$  has fewer elements than $M(x)\triangle M(y)$, but also these elements are among the shortest. Hence, handling the set $D(x,y)$ is more convenient because its size is $1/5$ of the size of $M(x)\triangle M(y)$.

 \begin{Example}  
 Let $x$ and $y$ be the coding sequences of Mouse and Rat, respectively. The average length of the words in $M(x) \triangle M(y)$ is about $5$ whereas the average length of the words in $\D(x,y)$ is about $4$. The total length $s(\D(x,y))$ is $199$ whereas $s(M(x) \triangle M(y))$ is $1038$, which witness a great reduction of the information needed for computing the distance.
 \end{Example}

\begin{table}[thp!]
\caption{Matrix of the values $s(D(x,y)) \slash s(M(x)\triangle M(y))$, for the data set with 11 living organisms}
\label{table:rationlenghtMAW}
\begin{center}
\resizebox{\textwidth}{!}{\begin{tabular} { | c | c | c | c | c | c | c | c | c | c | c | c |} 
\hline
SPECIE & Human & Goat & Oppssum & Gallus & Lemur & Mouse & Rabbit & Rat & Gorilla & Bovine & Chimpanzee\\ 
\hline
Human & 0.00 & 0.23 & 0.20 & 0.20 & 0.21 & 0.22 & 0.19 & 0.21 & 0.25 & 0.21 & 0.19 \\ 

\hline
Goat & & 0.00 & 0.22 & 0.19 & 0.23 & 0.26 & 0.23 & 0.20 & 0.22 & 0.21 & 0.21 \\ 

\hline
Opossum & & & 0.00 & 0.19 & 0.20 & 0.22 & 0.20 & 0.20 & 0.20 & 0.19 & 0.19 \\ 

\hline
Gallus & & & & 0.00 & 0.21 & 0.23 & 0.23 & 0.19 & 0.21 & 0.20 & 0.21 \\ 

\hline
Lemur & & & & & 0.00 & 0.26 & 0.24 & 0.25 & 0.21 & 0.24 & 0.22 \\ 

\hline
Mouse & & & & & & 0.00 & 0.17 & 0.19 & 0.21 & 0.25 & 0.21 \\ 

\hline
Rabbit & & & & & & & 0.00 & 0.19 & 0.20 & 0.20 & 0.20 \\ 

\hline
Rat & & & & & & & & 0.00 & 0.21 & 0.23 & 0.20 \\ 

\hline
Gorilla & & & & & & & & & 0.00 & 0.19 & 0.16 \\ 

\hline
Bovine & & & & & & & & & & 0.00 & 0.18 \\ 

\hline
Chimpanzee & & & & & & & & & & & 0.00 \\ 

\hline

\end{tabular}}
\end{center}
\end{table}

\paragraph*{Experimental results of phylogeny building using the two distances.}
In the figures shown in this section we give the phylogenetic trees of the 11 species computed using NJ and UPGMA methods on the two distances $\delta$ and $\dist$ considered in this paper.
Based on this data set, we can verify the goodness of our method by checking which of the following statements, based on the real biological phenomena, are verified by our new distance (cf. \cite{LIU2005}):
\begin{itemize}
\item $[REL\, 1]$ Gorilla and Chimpanzee are most similar to Human;
\item $[REL\, 2]$ Goat and Bovine should be similar to each other;
\item $[REL\, 3]$ Rat and Mouse should be similar to each other;
\item $[REL\, 4]$ Gallus should be remote from the other species because Gallus is the only non-mammalian representative in this group;
\item $[REL\, 5]$ Opossum is the most remote species from the remaining mammals;
\item $[REL\, 6]$ Besides Gallus and Opossum, Lemur is more remote from the other species relatively.
\end{itemize}

As we can observe from the phylogenetic trees that outputs from the matrix of distance $\delta$, relations 1 and 2 are fully satisfied. Rat and Mouse are quite close to each other, especially in NJ. 
Gallus, Opossum and Lemur are the farthest species from the others.


\begin{figure}[thp]
\centering
\begin{tikzpicture}[every tree node/.style={font=\small},
level distance=0.8cm,sibling distance=0.01cm, 
edge from parent path={(\tikzparentnode.south) -- +(0,-8pt) -| (\tikzchildnode)}],
frontier/.style={distance from root=350pt} 
\Tree 
[
\edge node[near end,left] {}; 
[
\edge node[] {}; [.Opossum ]
\edge node[] {}; [.Rabbit ]
]
\edge node[near end,right] {}; 
[ 
\edge node[] {}; [.Gallus ]
\edge node[] {}; 
[
\edge node[] {}; 
[
\edge node[] {}; [.Lemur ]
\edge node[] {}; 
[
\edge node[] {}; [.Goat ]
\edge node[] {}; [.Bovine ]
]
]
\edge node[] {}; 
[
\edge node[] {}; [.Rat ]
\edge node[] {}; 
[
\edge node[] {}; [.Mouse ]
\edge node[] {}; 
[
\edge node[] {}; [.Human ]
\edge node[] {}; 
[
\edge node[] {}; [.Gorilla ]
\edge node[] {}; [.Chimpanzee ]
]
]
]
]
]
]
]
\end{tikzpicture}
\caption{The phylogenetic tree of the 11 species computed using Neighbor Joining applied to the matrix of distance $\delta$ shown in Table \ref{table:ourdist}.}
\label{fig:CastiglioneNJ}
\end{figure}
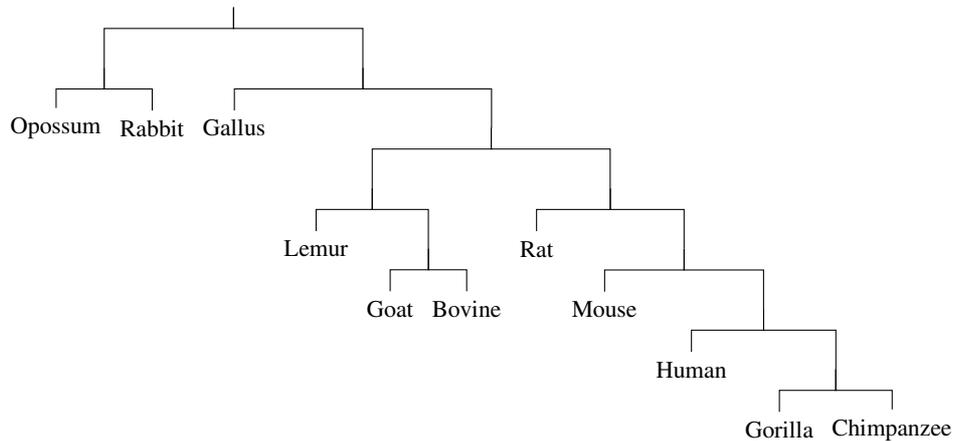

\begin{figure}[thp]
\centering
\begin{tikzpicture}[every tree node/.style={font=\small},
level distance=0.8cm,sibling distance=0.01cm, 
edge from parent path={(\tikzparentnode.south) -- +(0,-8pt) -| (\tikzchildnode)}],
frontier/.style={distance from root=350pt} 
\Tree 
[
\edge node[near end,left] {};
[
\edge node[] {}; [.Rat ]
\edge node[] {}; 
[
\edge node[] {}; [.Opossum ]
\edge node[] {}; 
[
\edge node[] {}; [.Gallus ]
\edge node[] {}; [.Rabbit ]
]
]
]
\edge node[near end,right] {};
[
\edge node[] {}; 
[
\edge node[] {}; [.Mouse ]
\edge node[] {}; 
[
\edge node[] {}; [.Human ]
\edge node[] {}; 
[
\edge node[] {}; [.Gorilla ]
\edge node[] {}; [.Chimpanzee ]
]
]
]
\edge node[] {}; 
[
\edge node[] {}; [.Lemur ]
\edge node[] {}; 
[
\edge node[] {}; [.Goat ]
\edge node[] {}; [.Bovine ]
]
]
]
]
\end{tikzpicture}
\caption{The phylogenetic tree of the 11 species computed using Neighbor Joining applied to the matrix of the distance $\dist$ shown in Table \ref{table:triangle}.}
\label{fig:CrochemoreNJ}
\end{figure}

\begin{figure}
\centering
\begin{tikzpicture}[every tree node/.style={font=\small},
level distance=0.8cm,sibling distance=0.01cm, 
edge from parent path={(\tikzparentnode.south) -- +(0,-8pt) -| (\tikzchildnode)}],
frontier/.style={distance from root=350pt} 
\Tree 
[
\edge node[near end,left] {}; [.Lemur ]
\edge node[near end,right] {}; 
[
\edge node[] {}; 
[
\edge node[] {}; [.Opossum ]
\edge node[] {}; [.Gallus ]
]
\edge node[] {}; 
[
\edge node[] {}; [.Rat ]
\edge node[] {}; 
[
\edge node[] {}; [.Rabbit ]
\edge node[] {}; 
[
\edge node[] {}; 
[
\edge node[] {}; [.Goat ]
\edge node[] {}; [.Bovine ]
]
\edge node[] {}; 
[
\edge node[] {}; [.Mouse ]
\edge node[] {}; 
[
\edge node[] {}; [.Chimpanzee ]
\edge node[] {}; 
[
\edge node[] {}; [.Human ]
\edge node[] {}; [.Gorilla ]
]
]
]
]
]
]
]
]
\end{tikzpicture}
\caption{The phylogenetic tree of the 11 species computed using UPGMA applied to the matrix of distance $\delta$ shown in Table \ref{table:ourdist}.}
\label{fig:CastiglioneUPGMA}
\end{figure}

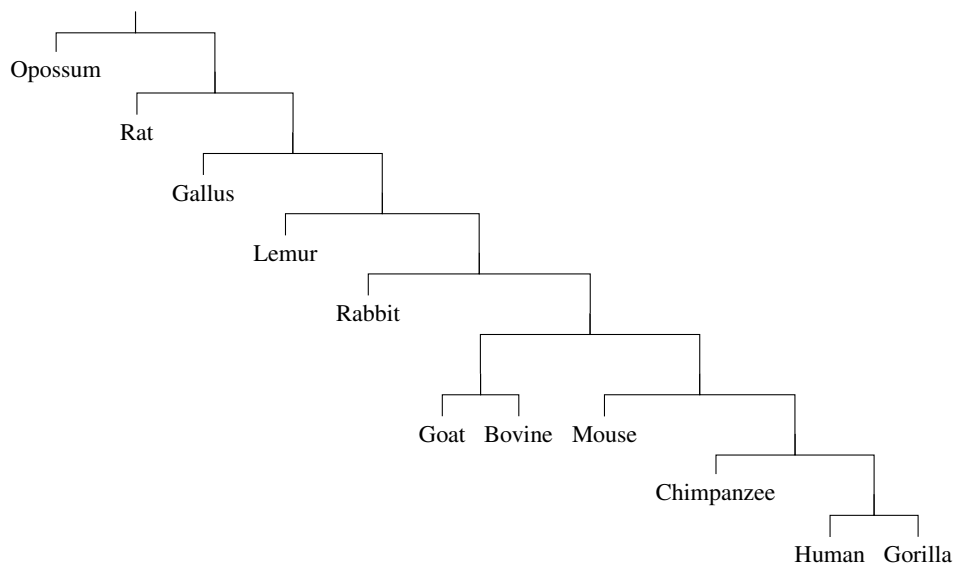
\begin{figure}
\centering
\begin{tikzpicture}[every tree node/.style={font=\small},
level distance=0.8cm,sibling distance=0.01cm, 
edge from parent path={(\tikzparentnode.south) -- +(0,-8pt) -| (\tikzchildnode)}],
frontier/.style={distance from root=350pt} 
\Tree 
[
\edge node[near end,left] {}; [.Opossum ]
\edge node[near end,right] {}; 
[
\edge node[] {}; [.Rat ]
\edge node[] {}; 
[
\edge node[] {}; [.Gallus ]
\edge node[] {}; 
[
\edge node[] {}; [.Lemur ]
\edge node[] {}; 
[
\edge node[] {}; [.Rabbit ]
\edge node[] {}; 
[
\edge node[] {}; 
[
\edge node[] {}; [.Goat ]
\edge node[] {}; [.Bovine ]
]
\edge node[] {}; 
[
\edge node[] {}; [.Mouse ]
\edge node[] {}; 
[
\edge node[] {}; [.Chimpanzee ]
\edge node[] {}; 
[
\edge node[] {}; [.Human ]
\edge node[] {}; [.Gorilla ]
]
]
]
]
]
]
]
]
]
\end{tikzpicture}
\caption{The phylogenetic tree of the 11 species computed using UPGMA applied to the matrix of distance $\dist$  shown in Table \ref{table:triangle}.}
\label{fig:CrochemoreUPGMA}
\end{figure}

\newpage

\section{Conclusions.}
In this paper we presented a new similarity/dissimilarity measure for pairs of sequences $(x,y)$, that takes inspiration from the one, based on the symmetric difference of the sets of minimal absent words of $x$ and $y$, introduced in ~\cite{Croche2012}. We consider instead a subset of the symmetric difference, consisting of the set of the minimal abstent words of $x$ that occur in $y$ and viceversa. This is motivated by the observation that the symmetric difference contains sequences that are absent in both $x$ and $y$, although minimal in just one of them, and we think that these words should not be considered to distinguish $x$ and $y$.
In order to validate the goodness of this dissimilarity measure, we have tested it on the data set of the first exon sequences of the $\beta$-globine gene of 11 living organisms, that is commonly considered in papers dealing with distances. By constructing the relative philogenetic trees (both with NJ and UPGMA methods), the distance seems to satisfy most of the requirements for a "good" distance on this data set.

We also construct two tables where we consider, for all the pairs of  species in the data set, the ratio between the cardinalities and the total length (respectively) of the sets $\D(x,y)$ and $M(x)\triangle M(y)$, showing that the set $D(x,y)$ is in general much smaller than $M(x)\triangle M(y)$ according to both of these parameters, with potential improvement to computational issues.
\bibliography{ref}

\end{document}